\newtheorem{theorem}{Theorem}[section]
\newtheorem{lemma}[theorem]{Lemma}
\newtheorem{definition}[theorem]{Definition}
\def\bh{{\mathbb H}}
\def\b{\beta}
\def\Tr{\mathrm{Tr}}
\def\<{\langle}
\def\>{\rangle}
\def\1{\mathbf{1}}
\def\bh{\mathbf{h}}
\def\id{{\bf 1}\!\!{\rm I}}
\def\Tr{\mathrm{Tr}}
\begin{document}

UDK 517.98\\

\begin{center}
{\Large {\bf Quantum Markov Chains on the Comb graphs: Ising model}}\\[1cm]

{\sc Farrukh Mukhamedov} \\[2mm]

 Department of Mathematical Sciences, College of Science, \\
 United Arab Emirates University 15551, Al-Ain,\\ United
Arab Emirates.\\

e-mail: {\tt far75m@gmail.com; farrukh.m@uaeu.ac.ae}\\[1cm]

{\sc Abdessatar Souissi} \\[2mm]

Department of Accounting, College of Business Management\\
Qassim University, Ar Rass, Saudi Arabia \, and \\
Preparatory institute for scientific and technical studies,\\
 Carthage University, Amilcar 1054, Tunisia\\
  e-mail: {{\tt a.souaissi@qu.edu.sa}; {\tt
 abdessattar.souissi@ipest.rnu.tn}}\\[1cm]

 {\sc Tarek Hamdi} \\[2mm]

Department of Management Information Systems, College of Business Management\\
Qassim University, Ar Rass, Saudi Arabia \, and \\
Laboratoire d'Analyse Math\'ematiques et applications LR11ES11 \\
Universit\'e de Tunis El-Manar, Tunisia\\
  e-mail: {{\tt t.hamdi@qu.edu.sa}}\\[1cm]
\end{center}

\small
\begin{center}
{\bf Abstract}\\
\end{center}
In the present paper, we construct quantum Markov chains (QMC) over the Comb graphs. As an application of this construction, it is proved the existence of the disordered phase for the Ising type models (within QMC scheme) over the Comb graphs. Moreover, it is also established that
the associated QMC has clustering property with respect to translations of the graph. We stress that this paper is the first one where a nontrivial example of QMC over non-regular graphs is given.
\vskip 0.3cm \noindent {\it Mathematics Subject
           Classification}: 46L53, 46L60, 82B10, 81Q10.\\
        {\it Key words}:  quantum Markov chain;  Ising model; Comb graph; clustering;

\normalsize

\section{Introduction}

Over the past decade, motivated
largely by the prospect of superefficient algorithms, the theory of quantum Markov
chains (QMC), especially in the guise of quantum walks, has generated a huge number of works,
including many discoveries of fundamental importance \cite{AW,DM2019,Ke,Pr}.
In \cite{Fing} it has been proposed a
novel approach to investigate quantum cryptography problems by means of QMC \cite{Gud} where quantum effects are entirely encoded into super-operators labelling transitions, and
the nodes of its transition graph carry only classical information and thus they are discrete. Recently, QMC have been applied  \cite{DK19,DM2019,1DM2019} to the investigations of so-called "open quantum random walks" \cite{attal,CP2,konno}.

On the other hand, in physics, a spacial classes of QMC, called "Matrix Product States" (MPS) and more generally "Tensor Network States" \cite{CV,Or} were used to investigate
quantum phase transitions for several lattice models.
This method uses the density matrix renormalization group (DMRG) algorithm which opened a new way of performing
the renormalization  procedure in 1D systems and gave extraordinary precise results. This is done by keeping the states of subsystems which
are relevant to describe the whole wave-function, and not those that minimize the energy on
that subsystem.
Those states had appeared in the
literature in many different contexts and with different names:
\begin{itemize}
\item[(a)] variational ansatz for
the transfermatrix in the estimation of the partition function of a classical model \cite{KW},;
\item[(b)] in the AKLT model in 1D  \cite{aklt}, where the ground state has the form of a valence bond
solid (VBS) \cite{LPS} which can be exactly written as an MPS. Translational invariant MPS in infinite
chains were thoroughly studied and characterized mathematically in full generality in  \cite{fannes, fannes2},
where they are called as finitely correlated states (FCS) (see also \cite{[Ac74f],[AcFri]} for closely related paper where QMC approach was used).
\end{itemize}

In \cite{AMSa1,AMSa2,AMSa3,AOM} it has been used a QMC approach to investigate models defined over the Cayley trees.
In this path the QMC scheme is based on the
 $C^*$-algebraic framework (see also \cite{ASG20}).
Furthermore, in \cite{MBS161,MBS162,MBSG20,MR1,MR2,MuSou2018} we have established that Gibbs measures of the Ising model with competing (Ising) interactions (with commuting interactions) on a Cayley trees,  can be considered as  QMC. Note that if the perturbation vanishes then the model reduces to the classical Ising one which was also examined in  \cite{ArE} by means of $C^*$-algebraic methods.
 Other types of models with $XY$-interactions on the same tree have studied in \cite{MG17,MG19,MG20}.
In \cite{NFG} using the
 matrix product states, it has been numerically investigated the quantum Ising model in a transverse field on the Cayley tree.
However, the Cayley tree is considered as a regular graph, therefore, it is interesting to develop QMC scheme for non-regular graphs.
One of the simplest non-regular graph is the Comb graph which has many applications in computer sciences. Several physical models over such
graph were investigated in \cite{Bax}.

The main aim of the present paper is to construct QMC over the Comb graphs. We stress that the construction essentially uses boundary conditions
associated with density amplitudes. As an application of such kind of construction, the existence of the disordered phase of the Ising model within QMC scheme over the Comb graph is proved. Moreover, it is also established that
the associated QMC has clustering property with respect to translations of the graph. We point out that our paper is the first one where a construction of nontrivial example of QMC over the non-regular graph is given. Besides, the provided construction will allow to
investigate phase transition problem for other kinds of models over non-regular graphs.

\section{Comb graphs}\label{Combsection}

In this section, we recall some necessary notions about the Comb graphs.
Let $\mathcal  G   = (L,E)$  be a locally finite    connected graph with infinite set of vertices.
An edge $l\in E$ is associated to the  pair of its endpoints $l=<x,y> = <y,x>$ and $E$ is then identifiable to a subset of $L\times L$.
$$
E\subset \{\{x,y\} \; : \; x,y\in L\}.
$$
Recall that:
\begin{itemize}
\item Two vertices $x$ and $y$ are called {\it nearest neighbors}, and we denote by  $x\sim y$, if there exists an edge joining them.\\
\item A collection of the elements  $x\sim x_1\sim \dots\sim x_{d-1}\sim y>$ is called a {\it path} from the vertex $x$ to the vertex $y$ with length $d$.\\
\item The distance $d(x,y), x,y\in V$, on the Comb graph, is the length of the shortest path from $x$ to $y$.\\
\item The interaction domains at a given vertex $x$ is given by
\begin{equation}
N_x =  \bigl\{ y \in L \quad : \quad x\sim y \bigr\}
\end{equation}
 its cardinal $|N_x|$ is called the  \textbf{valence} at the site $x$.
\end{itemize}

In  the sequel, we deal with the Comb graph $\mathcal C_d : = \mathbb N^{\rhd d}$, which is a tree base

\begin{figure}[h]
\centering
\includegraphics[width=0.5\textwidth]{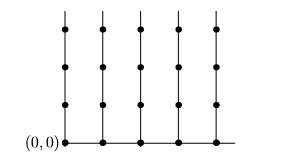}
\caption{Comb graph: $\mathbb N \rhd \mathbb N$}
\end{figure}

A  natural coordinate structure is associated to the Comb graph $\mathcal C$, as subset of the integer lattice $\mathbb N^d$. The vertex set is then $E=  \mathbb N^d$ and the  root is the origin $x^{(0)} := (0, \cdots, 0)$.
Each vertex $x$ is identified to a d-tuple   of  integers  $x\equiv (j_1, \cdots, j_d )$.
For each $k\in \{1, 2, \cdots, d\}$ consider
$$
e_k := (0, \cdots, 0 , 1 , 0, \cdots , 0)
$$
 where $1$ appears at the $k^{th}$-component.
 Let us set
\begin{eqnarray*}
 W_n &=&\bigg \{ x\in L \, : \, d(x,x^{(0)}) = n\bigg\}\\[2mm]
&=&
\bigg\{x=  (j_1, \cdots, j_d )\in \mathbb N^d \, : \, \sum_{k=1}^{d} j_k  =1\bigg\}.
\end{eqnarray*}
 \[   \Lambda_{n} =
\bigcup_{k=0}^n  W_k, \qquad   \Lambda_{[n,m]}=\bigcup_{k=n}^m  W_k, \
(n<m).
\]
Now for each $x\in  W_n$, we define its direct successors by
$$
S(x) : = \bigl\{ y\in  W_{n+1} \, : \,  x\sim y\bigr\}
$$
It is obvious that for  every vertex $x$ (distinct of the root $x^{(0)}$) one has  $|S(x)| = |N_x|-1 $.
In particular,  if  $x\in \Delta_k$ we get $|S(x)|  =  2(d-k ) + 1$.
Moreover, the tree structure of the considered graph yields to
\begin{equation}\label{Ln-union_S(x)}
 W_{n+1} = \bigcup_{x\in  W_n}S(x)  \quad \hbox{and} \quad S(x)\cap S(y) = \varnothing, \quad  \forall x\neq y.
\end{equation}

\section{Quantum Markov chains on the Comb graph}

Let $\mathcal A, \mathcal B$ be two unital C$^\ast$-algebras.  A  completely positive (CP) identity preserving map $\mathcal E : \mathcal A \otimes \mathcal B \to \mathcal B$ is called \textit{transition expectation}.
 Note the structure of completely positive maps  was clarified \cite{[Ac81]}  and its general formulation in terms of density matrices was carried out in the case of full matrix algebras.  Mainly, every CP-map $\mathcal E : \mathcal A \otimes \mathcal B \to \mathcal B $ is given by
$$
\mathcal E (x) = \sum_{j}\Tr_{\mathcal B]}\bigl( K_j^\ast x K_j); \quad x\in \mathcal A\otimes \mathcal B
$$
 where $K_j\in \mathcal A \otimes \mathcal B$ and $\Tr_{\mathcal B} : \mathcal A \otimes \mathcal B \to \mathcal B$ is the partial trace. In particular, if $K\in \mathcal A \otimes \mathcal B$ is any density, then
\begin{equation}
\mathcal E(x) = \Tr_{\mathcal B]}(K^\ast x K); \quad x\in \mathcal A \otimes \mathcal B
\end{equation}
such that
\begin{equation}\label{id_preser_E}
  \Tr_{\mathcal B]}(K^\ast  K) = I_{\mathcal B}
\end{equation}
is a (identity preserving) transition expectation.
In what follows, any operator $K\in \mathcal A\otimes \mathcal B$ which satisfies (\ref{id_preser_E}) will be called \textit{conditional density matrix.}

Consider a triplet $\mathcal C\subseteq\mathcal B\subseteq \mathcal A$ of C$^\ast$--algebras. A \textit{quasi-conditional expectation} \cite{ACe} is a completely positive identity preserving linear map $E :\mathcal A \to \mathcal B$ such that
$
E(ca) = cE(a)$, for all $a\in\mathcal A$, $c\in \mathcal C.
$

To each vertex $x$, we associate an algebra of observable  $\mathcal B_x\equiv \mathcal B(\mathcal H_x)$ for some finite  dimensional Hilbert space $\mathcal H_x$.  Consider the quasi-local algebra $\mathcal B_L : = \bigotimes_{x\in L} \mathcal B_x$ which the inductive limit of the net
$$
\mathcal B_\Lambda : = \bigotimes_{x\in \Lambda}\mathcal B_x \otimes I_{\Lambda^c}; \quad \Lambda\subset L, \,  |\Lambda|<\infty.
$$
Then
$$
\mathcal{B}_L = \overline{\mathcal{B}_{L, loc}}^{C^\ast}
$$
where
$$
\mathcal{B}_{L, loc} = \bigcup_{\Lambda \subset_{fin} L}\mathcal{B}_V.
$$
For the sake of simplicity, we will denote
$$
\mathcal{B}_{[m,n]} := \mathcal{B}_{\Lambda_{[m,n]}}.
$$

Starting from any quasi-conditional expectation $E_{n}: \mathcal B_{[0,n+1]} \to \mathcal B_{[0,n]}$ with respect to the triplet $\mathcal B_{ [0,n-1]} \subseteq \mathcal B_{ [0,n-1]}\subseteq\mathcal B_{ [0,n]}$ one can obtain a transition expectation from $\mathcal B_{[n,n+1]}$ into $\mathcal B_{ W_{n}}$  by the mere restriction
$$
 \mathcal E_{ [n,n+1]} :=  E_{n} \Big|_{\mathcal B_{[n, n+1]}}.
$$
Conversely, every transition expectation $\mathcal E_{[n,n+1]}:  \mathcal B_{[n,n+1]}\to  B_{ W_{n}}$ is extendable to  a quasi-conditional expectation  $E_{n}$ with respect to the above triplet in the following way
\begin{equation}\label{TE_QCE}
E_{n}:= id_{\mathcal B_{n-1]}}\otimes \mathcal E_{[n,n+1]}.
\end{equation}

 \begin{definition}\label{bQMC_def}
 A state $\varphi$ on $\mathcal B_L$  is called \textbf{quantum Markov chain} (QMC) with respect to a triplet $\big(\rho_0 , \{\mathcal E_{[n, n+1]}\}, \{h_n\}\big)$, where $\rho_0 $ is a state on $\mathcal B_{x^{(0)}}$, $\mathcal E_{[n ,n+1]} : \mathcal B_{[n,n+1]} \to \mathcal B_{ W_n}$ is a transition expectation and $\{h_n\}\subset\mathcal B_{ W_n, +}$ is a sequence of \textbf{boundary conditions}) if
\begin{equation}\label{bQMC_eq}
\varphi= \lim_{n \to +\infty} \rho_0\circ E_{0}\circ E_{1}\circ \cdots\circ E_{n}\circ {\mathbf{h}}_{n+1}
\end{equation}
where ${\mathbf{h}}_{k}(\cdot)=h_{k}^{1/2}(\cdot) h_k^{1/2}$.  Here we have used the notation \eqref{TE_QCE}.
\end{definition}

Thanks to \eqref{TE_QCE}, we can immediately verify that, the QMC $\varphi$ is evaluated at $a=a_0\otimes a_1 \otimes \dots \otimes a_{ W_n} \in \mathcal B_{ [0,n]}$, $a_j \in\mathcal B_{ W_j}$, by
$$
\varphi(a) =\lim_{n \to +\infty} \rho_0\Bigl(\mathcal E_{[0,1]}\Bigl(a_{ W_0}\otimes\mathcal E_{[1,2]}\Bigl(a_1\otimes\cdots \otimes \mathcal E_{[n ,n+1]}\Bigl(h^{1/2}_{n+1} a_n h^{1/2}_{n+1}\otimes I\Bigr)\Bigr)\Bigr)\Bigr)
$$
which highlights the quantum Markov chain structure.

We notice that a more general definition of QMC has been formulated in \cite{[AcFiMu07],{ASG20}}.

%
%

\section{Construction of QMC  on $\mathbb N^{\rhd d}$}

This section is devoted to a construction of quantum Markov chains associated with nearest-neighbors interactions on the algebra $\mathcal B_L$.

Let $K_{[n,n+1]} \in \mathcal B_{ [n,n+1]}$ be a conditional density matrix and let $\mathcal E_{[n ,n+1]} : \mathcal B_{ [n,n+1]} \to \mathcal B_{ W_n}$ its associated transition expectation given by
\begin{equation}\label{E_K}
\mathcal E_{[n ,n+1]}(a) = \Tr_{\mathcal B_{ W_n}]}\bigl(K_{[n, n+1]}^{\ast}aK_{[n, n+1]}\bigr)
\end{equation}
Note that  \eqref{Ln-union_S(x)} leads to  the identity
$$
\mathcal B_{ [n,n+1]} \equiv \bigotimes_{x\in  W_n}\mathcal B_{x\cup S(x)}
$$
A transition expectation $\mathcal E_{[n,n+1] } : \mathcal B_{ [n,n+1]} \to \mathcal B_{ W_n}$ is said to be \textit{localized} if it can be decomposed as follows
\begin{equation}\label{localized_E}
\mathcal E_{[n,n+1] } = \bigotimes_{x\in  W_n} \mathcal E_x
\end{equation}
for some transition expectation $\mathcal E_x : \mathcal B_{\{x\}\cup S(x)} \to \mathcal B_x$   for each vertex $x\in  W_n$.

Note that the partial trace $\Tr_{\mathcal B_{ W_n}]}$ is  by construction localized
$$
\Tr_{\mathcal B_{ W_n}]} = \bigotimes_{x\in  W_n}\Tr_{\mathcal B_{\{x\}\cup S(x)}]}
$$
One can see that the  localized transition expectation $\mathcal E_{[n,n+1]}$ highlights the fine structure of the considered graph. This property played an essential role in the study of phase transitions for quantum Markov chains on the Cayley tree \cite{MuSou2018}.

\begin{definition}
A conditional density  operator $K_{[n, n+1]}\in \mathcal B_{ [n,n+1]}$ is said to be \textit{localized} if
\begin{equation}\label{localized_K}
K_{[n,n+1]} = \bigotimes_{x\in  W_n}K_{\{x\}\cup S(x)}
\end{equation}
where $K_{\{x\}\cup S(x)}\in \mathcal B_{\{x\}\cup S(x)}$ for every $x\in  W_n$.
\end{definition}

We immediately can prove the following fact.

\begin{lemma}\label{localE_K_lem}
If the conditional density matrix $K_{[n,n+1]} =  \bigotimes_{x\in  W_n}K_{\{x\}\cup S(x)}$ is localized then the associated quasi-conditional expectation $\mathcal E_{[n, n+1]}$ is localized in the sense of (\ref{localized_E}) with
 \begin{equation}\label{Ex}
\mathcal E_x(\cdot)  = \Tr_{\mathcal B_x} \Bigl(K_{\{x\}\cup S(x) }^{\ast}\cdot K_{\{x\}\cup S(x)}  \Bigr).
\end{equation}
for each $x\in W_n$.
\end{lemma}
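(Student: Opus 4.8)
The plan is to verify the decomposition \eqref{localized_E} by evaluating $\mathcal E_{[n,n+1]}$ on elementary tensors and extending by linearity, combining three elementary facts: the block factorization of the algebra, the factorization of the conjugation by a product operator, and the localization of the partial trace. The identity is then pure bookkeeping; the only point deserving care is checking that each factor $\mathcal E_x$ is a genuine transition expectation.

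First I would record the site decomposition underlying everything. By \eqref{Ln-union_S(x)} the set $\Lambda_{[n,n+1]}$ is the disjoint union of the blocks $\{x\}\cup S(x)$, $x\in W_n$, since $W_{n+1}=\bigcup_{x\in W_n}S(x)$ with the $S(x)$ pairwise disjoint. This yields the identification $\mathcal B_{[n,n+1]}\equiv\bigotimes_{x\in W_n}\mathcal B_{\{x\}\cup S(x)}$ already used in the text, and simultaneously the localization of the partial trace $\Tr_{\mathcal B_{W_n}]}=\bigotimes_{x\in W_n}\Tr_{\mathcal B_x}$. Now take an elementary tensor $a=\bigotimes_{x\in W_n}a_x$ with $a_x\in\mathcal B_{\{x\}\cup S(x)}$. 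Since $K_{[n,n+1]}=\bigotimes_{x}K_{\{x\}\cup S(x)}$ by \eqref{localized_K}, conjugation by a product operator factorizes on product elements,
$$
K_{[n,n+1]}^{\ast}\,a\,K_{[n,n+1]}=\bigotimes_{x\in W_n}K_{\{x\}\cup S(x)}^{\ast}\,a_x\,K_{\{x\}\cup S(x)} .
$$
Applying the localized partial trace and using \eqref{E_K}, \eqref{Ex} gives
$$
\mathcal E_{[n,n+1]}(a)=\bigotimes_{x\in W_n}\Tr_{\mathcal B_x}\!\bigl(K_{\{x\}\cup S(x)}^{\ast}\,a_x\,K_{\{x\}\cup S(x)}\bigr)=\bigotimes_{x\in W_n}\mathcal E_x(a_x),
$$
which is exactly $\bigl(\bigotimes_{x}\mathcal E_x\bigr)(a)$. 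As $W_n$ and $W_{n+1}$ are finite, such elementary tensors span $\mathcal B_{[n,n+1]}$ linearly, and both $\mathcal E_{[n,n+1]}$ and $\bigotimes_x\mathcal E_x$ are linear, so \eqref{localized_E} follows on the whole algebra.

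It remains to check that each $\mathcal E_x$ is a transition expectation, i.e.\ completely positive and identity preserving; this is the only step needing attention, and it is where I expect the subtlety. Complete positivity is immediate, since $\mathcal E_x$ is a conjugation followed by a partial trace. For identity preservation, set $A_x:=\Tr_{\mathcal B_x}(K_{\{x\}\cup S(x)}^{\ast}K_{\{x\}\cup S(x)})\ge 0$; the global normalization \eqref{id_preser_E} together with the two factorizations reads $\bigotimes_{x\in W_n}A_x=I_{\mathcal B_{W_n}}=\bigotimes_{x\in W_n}I_{\mathcal B_x}$. In finite dimension a tensor product of positive operators equals the identity only when each factor is a positive scalar multiple of the identity, so $A_x=c_xI_{\mathcal B_x}$ with $c_x>0$ and $\prod_{x\in W_n}c_x=1$. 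Absorbing these scalars into the densities (replacing $K_{\{x\}\cup S(x)}$ by $c_x^{-1/2}K_{\{x\}\cup S(x)}$, which leaves the product $K_{[n,n+1]}$ unchanged) makes each factor a conditional density matrix, whence $\mathcal E_x(I)=I_{\mathcal B_x}$. Thus the main obstacle is not the factorization itself but this scalar normalization; equivalently, if one includes in the hypothesis that each $K_{\{x\}\cup S(x)}$ is itself a conditional density, identity preservation of each $\mathcal E_x$ is immediate from \eqref{id_preser_E} applied blockwise.
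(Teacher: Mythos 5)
Your argument is correct and is the natural one; the paper in fact states this lemma without any proof (``We immediately can prove the following fact''), so there is nothing to compare against beyond the implicit routine computation, which your elementary-tensor calculation carries out faithfully. Your closing observation is a genuine point the paper glosses over: the global normalization $\Tr_{\mathcal B_{W_n}]}(K_{[n,n+1]}^{\ast}K_{[n,n+1]})=I$ only forces each block $A_x=\Tr_{\mathcal B_x}(K_{\{x\}\cup S(x)}^{\ast}K_{\{x\}\cup S(x)})$ to be a positive scalar multiple of the identity, so the factors $\mathcal E_x$ are identity-preserving only after the scalar renormalization you describe (which leaves $K_{[n,n+1]}$ and hence $\mathcal E_{[n,n+1]}$ unchanged); in the paper's later concrete constructions each $K_{\{x\}\cup S(x)}$ is a conditional density in its own right, so this is consistent with how the lemma is actually used.
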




In the sequel, the pair $\{\rho_0,  h =(h_n)\}$ denotes a boundary condition such that $\rho_0$ is a faithful positive linear functional on the algebra $\mathcal B_{(x_0)}$ and  $h_n\in \mathcal B_{ W_n}^{+}$  for each $n$.

Let $\varphi^{(\rho_0, h)}_n$ be a linear functional on $\mathcal B_{[0,n]}$ given by
\begin{equation}\label{FF}
\varphi^{(\rho_0, h)}_n(a) = \rho_0\bigl(E_{0} \bigl( E_{1}\bigl( \cdots\bigl(  E_{n}\bigl(h^{1/2}_{n+1}ah^{1/2}_{n+1}\bigr)\bigr)\bigr)\bigr)\bigr).
\end{equation}

Observe that for each $a\in \mathcal B_{ [0,n]}$, one has
$$
E_n\bigl(h_{n+1}^{1/2}(a\otimes I)^\ast(a\otimes I)h_{n+1}^{1/2}\bigr)\ge 0
$$
and since the maps $\mathcal E_{[j, j+1]}$ are completely positive and the functional $\rho_0$ is a positive functional then  the functional $\varphi^{(\rho_0, h)}_n$ is positive.

The sequence $\{\varphi^{(\rho_0, h)}_n\}_n$ satisfies the \textit{compatibility condition} if
\begin{equation}\label{compatibility_eq}
\varphi_{n+1}^{(\rho_0, h)}\Big|_{\mathcal B_{ [0,n]}} = \varphi^{(\rho_0, h)}_n, \quad \ \textrm{for all} \  n.
\end{equation}
Clearly, the compatibility condition ensures the existence of the weak-limit
$$
\lim_{n\to\infty}\varphi^{(\rho_0, h)}_n.
$$

\begin{lemma}\label{E(hn+1)=hn} For the same notations as above, if
\begin{equation}\label{comp_E_h}
\mathcal E_{[n , n+1]}(h_{n+1}) = h_n\quad ; \quad \forall n\in\mathbb{N}
\end{equation}
then the  sequence $\{\varphi^{(\rho_0, h)}_n\}_n$ given by \eqref{FF} satisfies the compatibility condition \eqref{compatibility_eq}.

\end{lemma}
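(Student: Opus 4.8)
The plan is to verify the single-step identity \eqref{compatibility_eq} directly from the definition \eqref{FF}, the factorized form \eqref{TE_QCE} of the quasi-conditional expectations, and the hypothesis \eqref{comp_E_h}. Fix $n\in\mathbb N$ and an element $a\in\mathcal B_{[0,n]}$, which I regard as $a\otimes I_{W_{n+1}}\in\mathcal B_{[0,n+1]}$. Since the functionals are linear and the embedding $\mathcal B_{[0,n]}\hookrightarrow\mathcal B_{[0,n+1]}$ is this tensoring with the identity, it suffices to show $\varphi_{n+1}^{(\rho_0,h)}(a\otimes I_{W_{n+1}})=\varphi_n^{(\rho_0,h)}(a)$.

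First I would write out $\varphi_{n+1}^{(\rho_0,h)}(a\otimes I_{W_{n+1}})$ via \eqref{FF} (with $n$ replaced by $n+1$, so the boundary term is $h_{n+2}$ and the innermost map is $E_{n+1}$) and isolate that innermost layer. Because $a\otimes I_{W_{n+1}}\in\mathcal B_{[0,n+1]}$ and $h_{n+2}\in\mathcal B_{W_{n+2}}$ sit in disjoint tensor factors, they commute, whence
$$
h_{n+2}^{1/2}\,(a\otimes I_{W_{n+1}})\,h_{n+2}^{1/2}=a\otimes\bigl(I_{W_{n+1}}\otimes h_{n+2}\bigr),
$$
where now $a\in\mathcal B_{[0,n]}$ and $I_{W_{n+1}}\otimes h_{n+2}\in\mathcal B_{[n+1,n+2]}$. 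Applying $E_{n+1}=id_{\mathcal B_{[0,n]}}\otimes\mathcal E_{[n+1,n+2]}$ and using that it leaves the left factor untouched while acting on the right one, I obtain
$$
E_{n+1}\bigl(a\otimes(I_{W_{n+1}}\otimes h_{n+2})\bigr)=a\otimes\mathcal E_{[n+1,n+2]}(h_{n+2})=a\otimes h_{n+1},
$$
the last equality being precisely the hypothesis \eqref{comp_E_h}. The operator $a\otimes h_{n+1}$ lies in $\mathcal B_{[0,n+1]}$, so the outer composition $\rho_0\circ E_0\circ\cdots\circ E_n$ still applies unchanged, giving
$$
\varphi_{n+1}^{(\rho_0,h)}(a\otimes I_{W_{n+1}})=\rho_0\bigl(E_0\bigl(E_1\bigl(\cdots E_n(a\otimes h_{n+1})\bigr)\bigr)\bigr).
$$

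Finally I would note that the same disjoint-factor commutation applied at level $n$ yields $h_{n+1}^{1/2}\,a\,h_{n+1}^{1/2}=a\otimes h_{n+1}$ inside $\mathcal B_{[0,n+1]}$, so that by \eqref{FF} the right-hand side above is exactly $\varphi_n^{(\rho_0,h)}(a)$; this proves \eqref{compatibility_eq} for every $n$, and the consistency of the whole family follows by transitivity. I do not expect a genuine obstacle in this argument: the only point demanding care is the bookkeeping of the embeddings $\mathcal B_{[0,k]}\hookrightarrow\mathcal B_{[0,k+1]}$ and the verification that the boundary amplitude $h_{n+2}$ occupies a tensor factor disjoint from $a$, which is what legitimizes both the commutation relation and the module property of $E_{n+1}$. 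Once these identifications are pinned down, the entire verification collapses onto the single hypothesis \eqref{comp_E_h}.
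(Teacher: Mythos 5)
Your proposal is correct and follows essentially the same route as the paper's proof: both exploit the fact that $a\in\mathcal B_{[0,n]}$ commutes with $h_{n+1}$ and $h_{n+2}$ (disjoint tensor factors), use the module property of $E_{n+1}=id_{\mathcal B_{[0,n]}}\otimes\mathcal E_{[n+1,n+2]}$ to pull $a$ out of the innermost layer, invoke the hypothesis \eqref{comp_E_h} to replace $\mathcal E_{[n+1,n+2]}(h_{n+2})$ by $h_{n+1}$, and then re-symmetrize $a\otimes h_{n+1}$ as $h_{n+1}^{1/2}ah_{n+1}^{1/2}$ to recognize $\varphi_n^{(\rho_0,h)}(a)$. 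Your version is merely more explicit about the embeddings; no substantive difference.
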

\begin{proof}
Let $a\in \mathcal B_{  \Lambda_{n}}$. Due to $[h_{n+i},a]=0$, ($i=1,2$) we find
\begin{eqnarray*}
\varphi_{n+1}^{(\rho_0, h)}(a) &=& \rho_0\bigl( \mathcal E_{[0 ,1]} \bigl( \mathcal E_{[1 ,2]}\bigl( \cdots \mathcal E_{[n  ,n+1]}\bigl(  \mathcal E_{[n+1 ,n+2]}\bigl(h_{n+2}a\otimes I\bigr)\bigr)\cdots \bigr)\bigr)\bigr)\bigr),\\[2mm]
&=&\rho_0\bigl( \mathcal E_{[0 ,1]} \bigl( \mathcal E_{[1 ,2]}\bigl( \cdots \mathcal E_{[n  ,n+1]}\bigl(  \mathcal E_{[n+1 ,n+2]}\bigl(h_{n+2}\bigr)a\otimes I\bigr)\cdots \bigr)\bigr)\bigr)\bigr),\\[2mm]
&=&\rho_0\bigl( \mathcal E_{[0 ,1]} \bigl( \mathcal E_{[1 ,2]}\bigl( \cdots \mathcal E_{[n  ,n+1]}\bigl(  h_{n+1}a\otimes I\bigr)\cdots I\bigr)\cdots
\bigr)\bigr)\bigr)\bigr),\\[2mm]
&=&\rho_0\bigl( \mathcal E_{[0 ,1]} \bigl( \mathcal E_{[1 ,2]}\bigl( \cdots \mathcal E_{[n  ,n+1]}\bigl(  h^{1/2}_{n+1}a  h^{1/2}_{n+1}\otimes I\bigr)\cdots I\bigr)\cdots
\bigr)\bigr)\bigr)\bigr),\\[2mm]
&=&\varphi_{n}^{(\rho_0, h)}(a).
\end{eqnarray*}
 This completes the proof.
\end{proof}

\begin{theorem}
Let $ K_{\{x\}\cup S(x)}\in \mathcal{B}_{\{x\}\cup S(x)}$ be given as above. Assume that the boundary condition $ (\rho_o, h= (h_u)_{u\in L})$ satisfies the following conditions
\begin{equation}\label{rho_0(h_0)}
\rho_o(h_o) = 1
\end{equation}
\begin{equation}\label{comp_h_x}
  \Tr_{\mathcal B_x]}\Bigl(K_{\{x\}\cup S(x)}^\ast  I^{(x)}\otimes\bigotimes_{y\in S(x)}h^{(y)}
K_{\{x\}\cup S(x)}\Bigr) = h^{(x)}
\end{equation}
  then  the sequence $\{\varphi^{(\rho_0, h)}_n\}$  satisfies the compatibility condition  \eqref{compatibility_eq}. Moreover, there exists
   a  quantum Markov chain.
\end{theorem}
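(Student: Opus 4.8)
The plan is to assemble the statement from the two preceding lemmas, the only genuine work being to pass from the \emph{local} fixed-point relation \eqref{comp_h_x} to the \emph{global} one \eqref{comp_E_h}. Accordingly, the first step is to set
\[
h_n := \bigotimes_{x\in W_n} h^{(x)} \in \mathcal B_{W_n}^{+},
\]
and to prove that with this choice one has $\mathcal E_{[n,n+1]}(h_{n+1}) = h_n$ for every $n$. Once this is established, Lemma \ref{E(hn+1)=hn} immediately delivers the compatibility condition \eqref{compatibility_eq}, and it will remain only to check that the resulting limit is a genuine state, i.e.\ that it is positive and normalized.

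For the key step I would exploit the localized structure. Since $K_{[n,n+1]} = \bigotimes_{x\in W_n} K_{\{x\}\cup S(x)}$ is localized, Lemma \ref{localE_K_lem} yields the factorization $\mathcal E_{[n,n+1]} = \bigotimes_{x\in W_n}\mathcal E_x$ with $\mathcal E_x$ as in \eqref{Ex}. On the other hand, the tree identity \eqref{Ln-union_S(x)} expresses $W_{n+1}$ as the \emph{disjoint} union $\bigcup_{x\in W_n} S(x)$, so that under the identification $\mathcal B_{[n,n+1]}\equiv\bigotimes_{x\in W_n}\mathcal B_{\{x\}\cup S(x)}$ the element $h_{n+1}$ embeds as
\[
h_{n+1}\equiv \bigotimes_{x\in W_n}\Bigl(I^{(x)}\otimes\bigotimes_{y\in S(x)}h^{(y)}\Bigr).
\]
Applying the factorized transition expectation term by term and invoking exactly the local hypothesis \eqref{comp_h_x},
\[
\mathcal E_{[n,n+1]}(h_{n+1}) = \bigotimes_{x\in W_n}\mathcal E_x\Bigl(I^{(x)}\otimes\bigotimes_{y\in S(x)}h^{(y)}\Bigr) = \bigotimes_{x\in W_n}h^{(x)} = h_n,
\]
which is precisely \eqref{comp_E_h}.

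To finish, positivity of each $\varphi^{(\rho_0, h)}_n$ has already been observed, and Lemma \ref{E(hn+1)=hn} together with the previous step guarantees that the weak limit $\varphi := \lim_n \varphi^{(\rho_0, h)}_n$ exists on $\mathcal B_{L,loc}$ and extends to $\mathcal B_L$. For normalization I would iterate \eqref{comp_E_h}: telescoping the composition gives $\mathcal E_{[0,1]}\circ\cdots\circ\mathcal E_{[n,n+1]}(h_{n+1}) = h_0$, whence $\varphi^{(\rho_0, h)}_n(I) = \rho_0(h_0) = 1$ by \eqref{rho_0(h_0)}. Thus every $\varphi^{(\rho_0, h)}_n$, and hence $\varphi$, is a state; by Definition \ref{bQMC_def} it is a quantum Markov chain with respect to $\bigl(\rho_0,\{\mathcal E_{[n,n+1]}\},\{h_n\}\bigr)$.

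The only delicate point is the term-by-term evaluation in the second paragraph: one must verify that feeding the product operator $h_{n+1}$ into the factorized map $\bigotimes_{x\in W_n}\mathcal E_x$ genuinely decouples across the sites of $W_n$, which rests on the disjointness $S(x)\cap S(y)=\varnothing$ from \eqref{Ln-union_S(x)} and on each $\mathcal E_x$ acting only on the $\{x\}\cup S(x)$ tensor factor. This amounts to careful bookkeeping with the embeddings $I^{(x)}\otimes(\cdot)$, but it is the step that has to be handled with care; everything else is a direct application of Lemmas \ref{localE_K_lem} and \ref{E(hn+1)=hn}.
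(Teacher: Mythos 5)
Your proposal is correct and follows essentially the same route as the paper: localize $\mathcal E_{[n,n+1]}$ via Lemma \ref{localE_K_lem}, apply the local fixed-point relation \eqref{comp_h_x} factor by factor over $W_n$ to obtain $\mathcal E_{[n,n+1]}(h_{n+1})=h_n$, and then conclude by Lemma \ref{E(hn+1)=hn}. Your added telescoping check that $\varphi_n^{(\rho_0,h)}(I)=\rho_0(h_0)=1$ is a small but welcome explicit verification of normalization that the paper leaves implicit.
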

\begin{proof}
 From Lemma \ref{localE_K_lem}, the maps   $\mathcal E_{[n, n+1]}$ are completely positive and localized in the sense of \eqref{localized_E}.
One finds
\begin{eqnarray*}
	\mathcal E_{[n, n+1]}(h_{n+1})& = & \Tr_{\mathcal B_{ W_n}]}\bigl(K_{[n, n+1]}^{\ast}h_nK_{[n, n+1]}\bigr),\\
& = &\bigotimes_{x\in  W_n} \Bigl(K_{\{x\}\cup S(x)}^\ast  I^{(x)}\otimes\bigotimes_{y\in S(x)}h^{(y)}
K_{\{x\}\cup S(x)}\Bigr),\\
& = &\bigotimes_{x\in  W_n} h^{(x)},\\
& = & h_n.
\end{eqnarray*}
Hence, by virtue of Lemma \ref{E(hn+1)=hn}, we get the desired assertion, which completes the proof.
\end{proof}

\section{Quantum Markov chains associated with Ising type models on the Comb graph $\mathbb N\rhd_0 \mathbb N$ }

In this and the forthcoming sections, we restrict ourselves to a semi-infinite Comb graph $\mathbb N\rhd_0 \mathbb N$ with distinguished vertex
$x^{(0)}= (0,0)$. In what follows, as usually, by $L$ we denote the set of all vertices and $E$ is the set of edges of $\mathbb N\rhd_0 \mathbb N$, respectively.
A simple coordinate structure is naturally associated to the  $\mathbb N\rhd_0 \mathbb N$.
Each vertex $x$ is identified with a pair of non-negative integers  $x\equiv (k,l)$.
Here $k$ is the component of $x$ on the horizontal axis and $l$ on the vertical one.

 We enumerate elements of $ W_{n}$ in the following way
\begin{equation}
x_{ W_n}^{(0)} = (n,0)  \quad  \, x_{ W_n}^{(1)} = (n-1, 1) \quad \, \cdots \quad \,   x_{ W_n}^{(n)} = (0,n)
\end{equation}
and we write
$$
\overrightarrow W_n := \bigl\{ x_{ W_n}^{(0)}, x_{ W_n}^{(1)}, \cdots, x_{ W_n}^{(n)} \bigr\}.
$$

Accordingly, we distinguish two different types of vertices: ones  having two  direct successors and others having only one. Define
 \begin{equation}\label{L1}
 L_1 = \{x\in L \, : \, |S(x)| = 1\}
 \end{equation}
 \begin{equation}\label{L2}
 L_2 = \{x\in L \, : \, |S(x)| = 2\}  = \{ x_{ W_n}^{(0)} \ : \  n\in\mathbb N\}
 \end{equation}
 here as before, $S(x)$ denotes the direct successors of $x$.

It is clear that  $L= L_1\cup L_2$. For   $u\in L_2$  one has
$$
S(u) = \{u+e_1 ,  \,  u+e_2  \}
$$
whereas, elements of $L_1$ have a form $v = (k,l)$ with $l\ge 1$ and
 $$
S(v) = \{v+e_2 \}
$$
where $e_1= (1,0)$ and $e_2= (0,1)$.

Denote
\begin{equation}\label{pauli} \id^{(u)}=\left(
          \begin{array}{cc}
            1 & 0 \\
            0 & 1 \\
          \end{array}
        \right), \
        \,        \
\sigma_z^{(u)}= \left(
          \begin{array}{cc}
            1 & 0 \\
            0 & -1 \\
          \end{array}
        \right).
\end{equation}

In what follows, we consider the
same $C^{*}$-algebra $\mathcal B_L$ but with $\mathcal B_{u}=M_{2}(\mathbb C)$
for all $u\in L$.

Let $h^{(u)} = \left(
\begin{array}{cc}
  h^{(u)}_{11} & h^{(u)}_{12} \\
  h^{(u)}_{21} & h^{(u)}_{22} \\
  \end{array}
  \right)\in M_2(\mathbb C)^+$. The family $\{h^{(u)}, \, u\in L\}$ is said to be homogeneous (or translation invariant) boundary
   conditions if $h^{(u)} = h^{(x^{0})}$ for each $u\in L$. We are interested in the resolution of the equations
\begin{equation}\label{Dysys}
    \mathrm{Tr}\left(K_{\{u\}\cup \overrightarrow{S}(u)}^\ast \id^{(u)}\otimes h_{S(u)}K_{\{u\}\cup \overrightarrow{S}(u)}\right) = h^{(u)},
\end{equation}
where
$$
\left\{
  \begin{array}{cc}
    K_{\{u\}\cup S(u)} \in \mathcal{B}_{u}\otimes\mathcal{B}_{u+e_2}; \quad h_{S(u)} = h^{(u)}\otimes h^{(u+e_2)} & \hbox{if $u\in L_1$;} \\
\\
   K_{\{u\}\cup S(u)} \in \mathcal{B}_{u}\otimes\mathcal{B}_{u+e_1}\otimes\mathcal{B}_{u+e_2}; \quad h_{S(u)} = h^{(u)}\otimes h^{(u+e_1)}\otimes h^{(u+e_2)} & \hbox{if $u\in L_2$.} \\
  \end{array}
\right.
$$

In this paper, we restrict ourselves to the description of translation-invariant solutions of \eqref{comp_h_x}.
Therefore, we always assume that: $ h^{x}=h$ for all $x\in L$, here
$$
h= \left(
\begin{array}{cc}
            h_{11} & h_{12} \\
            h_{21} & h_{22} \\
          \end{array}
\right).$$

\subsection{An Ising model on  vertices with degree two}\label{Sb_sect_Ising_L1}

In this subsection, we investigate a concrete Ising type model associated on vertices  $v\in L_1$ and their successors.

Define
\begin{equation}\label{KuL_1}
 \tilde{K}_{<v, v+e_2>} = \cos(\beta) \id^{(v)}\otimes\id^{(v+e_2)} - \sin(\beta) \sigma_z^{(v)}\otimes \sigma_z^{(v+e_2)}
\end{equation}
Put
$$
K_{\{v\}\cup S(u)} =  \tilde{K}_{<v, v+e_2>}.
$$

Then \eqref{Dysys} reduces to
\begin{equation}\label{DySyL1}
  \mathrm{Tr}\left(\tilde{K}_{<u, u+e_2>} \id^{(u)}\otimes h^{(u+e_2)}\tilde{K}_{<u, u+e_2>}\right) = h^{(u)}
\end{equation}
One can calculate
\begin{eqnarray*}
&&K_{\{u\}\cup S(u)} \id^{(u)}\otimes h^{(u+e_2)}K_{\{u\}\cup S(u)}^\ast \\
&&= \cos^{(2)}(\beta)\id^{(u)}\otimes h^{(u+e_2)} - \cos(\beta)\sin(\beta) \sigma_z^{(u)}\otimes h^{(u+e_2)}\sigma_z^{(u+e_2)}\\
&&- \cos(\beta)\sin(\beta) \sigma_z^{(u)}\otimes \sigma_z^{(u+e_2)}h^{(u+e_2)} + \sin^2(\beta) \id^{(u)}\otimes \sigma_zh^{(u+e_2)}\sigma_z.
\end{eqnarray*}

Hence, \eqref{DySyL1} is equivalent to
 $$
 \left\{
   \begin{array}{ll}
       \left( \frac{h^{(u+e_2)}_{11}+ h^{(u+e_2)}_{22}}{2}\right) -  \sin(2\beta) \left( \frac{h^{(u+e_2)}_{11}- h^{(u+e_2)}_{22}}{2}\right) = h^{(u)}_{11}; \\
 \\
    \left( \frac{h^{(u+e_2)}_{11}+ h^{(u+e_2)}_{22}}{2}\right) +  \sin(2\beta) \left( \frac{h^{(u+e_2)}_{11}- h^{(u+e_2)}_{22}}{2}\right) = h^{(u)}_{22}; \\
\\
h^{(u)}_{12} = h^{(u)}_{21} = 0.
   \end{array}
 \right.
 $$
According to the positivity of $h$, a simple calculation leads to the solution
\begin{equation}\label{solDysys1}
h^{(u)} =  a\id^{(u)}, \quad  a>0.
\end{equation}

\subsection{An Ising model on vertices with degree three}\label{Sb_sect_Ising_L2}

In this subsection, we consider another type of Ising model on triples $(v,  v+e_1, v+ e_2)$ for each $v\in L_2$.

 Define nearest neighbors interactions by
\begin{equation}\label{1Kxy1}
K_{<v,(v+e_i)>}=\exp\{\b H_{<v,(v+e_i)>}\}, \ \ i=1,2,\ \b>0,
\end{equation}
where
\begin{eqnarray}\label{1Hxy1}
H_{<v,(v,i)>}=\frac{1}{2}\big(\id^{v)}\id^{(v+e_i)}+\sigma_{z}^{(v)}\sigma_{z}^{(v+e_i)}\big),
\end{eqnarray}
and the one level nearest neighbours interaction between $v+e_1$ and $v+e_2$ by
\begin{equation}\label{Lve1e2}
  L_{>v+e_1, v+e_2<}=\exp\{J\beta H_{>v+e_1, v+e_2<}\}, \ \ J>0,
 \end{equation}
where
\begin{equation}\label{Hve1e2}
H_{>v+e_1, v+e_2<}=\frac{1}{2}\big(\id^{v+e_1}\id^{v+e_2}+\sigma_{z}^{v+e_1}\sigma_{z}^{v+e_2}\big).
\end{equation}
The constant $J>0$ is known in the physics literature as coupling constant.

The defined model is called  an {\it Ising model with competing
interactions} per vertices  $(v,(v+e_1),(v+e_2))$.

One can check that
\begin{eqnarray*}
&&H_{<u,u+e_i>}^{m}=H_{<u,u +e_i>}=\frac{1}{2}\big(\id^{(u)}\id^{(u +e_i)}+\sigma_{z}^{(u)}\sigma_{z}^{(u +e_i)}\big),\\[2mm]
\label{1Hxym}
&&H_{>u +e_1,u +e_2<}^{m}=H_{>u +e_1,u +e_2<}=\frac{1}{2}\big(\id^{(u +e_1)}\id^{(u +e_2)}+\sigma_{z}^{(u +e_1)}\sigma_{z}^{(u +e_2)}\big).
\end{eqnarray*}
Therefore,
\begin{eqnarray}
\label{Kuv_L2}&&K_{<u,u +e_i>}=K_0\id^{(u)}\id^{(u +e_i)}+K_3\sigma_{z}^{(u)}\sigma_{z}^{(u +e_i)},\\[2mm]
\label{Luv_L2}&&L_{>u +e_1, u +e_2<} = R_0\id^{(u +e_1)}\id^{(u +e_3)}+R_3\sigma_{z}^{(u +e_1)}\sigma_{z}^{(u +e_2)},
\end{eqnarray}
here
\begin{eqnarray*}
&&K_0=\frac{\exp{\beta}+1}{2},\ \ \   K_3=\frac{\exp{\beta}-1}{2},\\[2mm]
&&R_0=\frac{\exp{(J\beta)}+1}{2}, \ \ \
R_3=\frac{\exp{(J\beta)}-1}{2}.
\end{eqnarray*}
For each $u\in L_2$, we define
\begin{equation}\label{KvS(v)L2}
  K_{\{v\}\cup S(v)} = K_{<v, v+e_1>}K_{<v, v+e_2>}L_{>v+e_1, v+e_2<}.
\end{equation}

Hence,  \eqref{Dysys} reduces to
\begin{equation}\label{Dysys2}
  \mathrm{Tr}_{u]}\left(K_{\{v\}\cup S(v)}^{*} \id^{(u)}\otimes h^{(u+ e_1)}\otimes h^{(u+ e_2)} K_{\{v\}\cup S(v)}\right) = h^{(u)}.
\end{equation}

From \eqref{Kuv_L2} and \eqref{Luv_L2}, it follows that
\begin{eqnarray}\label{Ax}
K_{\{v\}\cup S(v)}&=&A\id^{(v)}\otimes\id^{(v+e_1)}\otimes\id^{(v+e_2)}+B\sigma_{z}^{(v)}\otimes\sigma_{z}^{(v+e_1)}\otimes\id^{(v+e_2)}\nonumber\\[2mm]
&&+
B\sigma_{z}^{(v)}\otimes\id^{(v+e_1)}\otimes\sigma_{z}^{(v+e_2)}+C \id^{(v)}\otimes\sigma_{z}^{(v+e_1)}\otimes\sigma_{z}^{(v+e_2)},
\end{eqnarray}
where
\begin{equation}
\left\{
  \begin{array}{ll}
    A =K_{0}^{2}R_{0}+K_{3}^{2}.R_{3}=\frac{1}{4}[\exp{(J+2)\beta}+\exp{J\beta}+2\exp{\beta}], \\
    \\
  B =K_{0}K_{3}(R_{0}+R_{3})=\frac{1}{4}\exp{J\beta}[\exp{2\beta}-1],  \\
    \\
    C =K_{0}^{2}R_{3}+K_{3}^{2}R_{0}=\frac{1}{4}[\exp{(J+2)\beta}+\exp{J\beta}-2\exp{\beta}]. \\
  \end{array}
\right.
\end{equation}

Then
\begin{eqnarray*}\label{AHHA}
&& K_{\{v\}\cup S(v)}^\ast\times[\id^{(v)}\otimes h^{(v+e_1)}\otimes h^{(v+e_2)}]\times K_{\{v\}\cup S(v)} \nonumber\\
\\
&=&\big[A^2\id\otimes h\otimes h + B^2\id\otimes \sigma_{z} h\sigma_{z} \otimes h + B^2\id\otimes h \otimes \sigma_{z} h \sigma_{z} +
C^2\id\otimes\sigma_{z} h \sigma_{z}\otimes\sigma_{z} h \sigma_{z}\big]\nonumber\\
\\
&+&\big[AC\id\otimes h \sigma_{z}\otimes h \sigma_{z} + AC\id\otimes\sigma_{z} h\otimes
\sigma_{z} h + B^2\id\otimes\sigma_{z} h \otimes h \sigma_{z} + B^2\id\otimes h \sigma_{z}\otimes\sigma_{z} h\big]\nonumber\\
\\
& +&\big[AB\sigma_{z}\otimes h\sigma_{z} \otimes h+ AB\sigma_{z}\otimes h\otimes h\sigma_{z} + AB\sigma_{z}\otimes h\otimes\sigma_{z} h +
AB \sigma_{z}\otimes\sigma_{z} h\otimes h\big] \nonumber\\
 \\
&+&\big[BC\sigma_{z}\otimes\sigma_{z} h \sigma_{z}\otimes h \sigma_{z} + BC\sigma_{z}\otimes h\sigma_{z}\otimes\sigma_{z} h \sigma_{z} + BC\sigma_{z}\otimes\sigma_{z} h\sigma_{z}\otimes\sigma_{z} h
 + BC\sigma_{z}\otimes \sigma_{z} h \otimes\sigma_{z} h \sigma_{z}\big].\nonumber\\
\end{eqnarray*}
Therefore, due to the last equality, we rewrite \eqref{Dysys2} as
follows
\begin{eqnarray} \label{eqder}
h^{(v)}&=&\Tr_{x]}K_{\{v\}\cup S(v)}^{*}\left(\id^{(v)}\otimes h\otimes h\right)K_{\{v\}\cup S(v)}\nonumber\\
&=&\left(\tau_1{\Tr( h)}^2+\tau_2{\Tr(\sigma_{z} h)}^2\right)\id^{(v)} + \tau_3\Tr(h)\Tr(\sigma_{z} h)\sigma_{z}^{(v)},
\end{eqnarray}
where $\theta=\exp(2\beta)>0$ and
\begin{equation}\label{tau123}
\left\{
   \begin{array}{ll}
     \tau_1:=A^2+2B^2+C^2= \frac{1}{4}[\theta^J(\theta^2+1)+2\theta],\\
      \\
     \tau_2:=2(AC + B^2)= \frac{1}{4}[\theta^J(\theta^2+1)-2\theta],\\
      \\
     \tau_3:=4B(A+C)=\frac{1}{2}\theta^J(\theta^2-1).
   \end{array}
 \right.\end{equation}
From
$$\Tr(h)=\frac{h_{11}+h_{22}}{2} \quad ;  \quad
\Tr(\sigma_{z} h)=\frac{h_{11}-h_{22}}{2}$$
the equation \eqref{eqder} reduces to
\begin{equation}\label{EQ1}
\left\{
   \begin{array}{lll}
\Tr(h)=\tau_1\Tr(h)^2+\tau_2\Tr(\sigma_{z} h)^2,\\
\\
\Tr(\sigma_{z} h)=\tau_3\Tr(h)\Tr(\sigma_{z} h),\\
\\
h_{21}=0,\quad  h_{12}=0.\\
   \end{array}
 \right.
 \end{equation}
Since, we are interested in transition-invariant solutions, it is convenient to find
those one with $h_{1,1}=h_{2,2}$. So, \eqref{EQ1} is reduced to
\begin{equation*}
    h_{11}=h_{22}=\frac{1}{\tau_1}.
\end{equation*} Hence,
\begin{equation}\label{solDysys2}
h^{(u)} = \frac{1}{\tau_1}\id^{(u)};\quad \forall u\in L_2.
\end{equation}

Now we are ready to formulate a main result of this section.

\begin{theorem}\label{thm1}
For the Ising  (ZZ) coupling model (\ref{KuL_1}) and the Ising model with competing interactions
(\ref{1Kxy1}), (\ref{Lve1e2}),\, $\b>0, J>0$ on the Comb graph $\mathbb{N}\rhd_0\mathbb{N}$,
  there exists a quantum Markov chain  $\varphi$ with homogeneous boundaries.
   Moreover, it  can be written on the local algebra
$\mathcal{B}_{L, loc}$ by:
\begin{equation}
\varphi(a)=\alpha^{n}\Tr\bigg(a\prod_{i=0}^{n-1}K_{[i,i+1]}K_{[i,i+1]}^{*}\bigg),
\ \ \forall a\in B_{ [0,n]}.
\end{equation}
\end{theorem}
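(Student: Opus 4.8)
The plan is to deduce the statement from the general construction theorem established above together with the two explicit computations of Subsections~\ref{Sb_sect_Ising_L1} and \ref{Sb_sect_Ising_L2}. First I would reconcile the two boundary solutions into a single homogeneous one. By Lemma~\ref{localE_K_lem} the conditional density matrices are localized, so the fixed-point equation \eqref{comp_h_x} decouples into the per-vertex equations solved earlier: for $v\in L_1$ it is satisfied by $h^{(v)}=a\,\id^{(v)}$ with $a>0$ \emph{arbitrary} (see \eqref{solDysys1}), while for $v\in L_2$ it forces $h^{(v)}=\tfrac{1}{\tau_1}\id^{(v)}$ (see \eqref{solDysys2}). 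Since the free constant on the $L_1$ vertices may be fixed at will, the choice $a=1/\tau_1$ produces the translation-invariant family $h^{(u)}=\tfrac{1}{\tau_1}\id^{(u)}$ for every $u\in L$, which simultaneously satisfies \eqref{comp_h_x} at both vertex types. Taking $\rho_0$ to be the multiple of the trace on $\mathcal B_{x^{(0)}}$ normalized by $\rho_0(h_0)=1$ (condition \eqref{rho_0(h_0)}), all hypotheses of the general theorem hold, whence $\{\varphi^{(\rho_0,h)}_n\}$ is compatible and its weak limit $\varphi$ is the claimed QMC with homogeneous boundary.

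For the explicit formula I would exploit that, by the compatibility just established, $\varphi(a)=\varphi^{(\rho_0,h)}_n(a)$ for every $a\in\mathcal B_{[0,n]}$, so it suffices to unfold \eqref{FF}. Writing $E_k=id\otimes\mathcal E_{[k,k+1]}$ with $\mathcal E_{[k,k+1]}$ as in \eqref{E_K}, and noting that each $K_{[k,k+1]}$ acts only on $\mathcal B_{W_k}\otimes\mathcal B_{W_{k+1}}$ and hence commutes with the partial traces over the strictly higher levels, I can pull every conjugation outside the iterated partial trace. Together with $\rho_0=\lambda\,\Tr$ on $\mathcal B_{x^{(0)}}$ and $h_{n+1}=c_{n+1}\id$ (so that $h_{n+1}^{1/2}a\,h_{n+1}^{1/2}=c_{n+1}\,a\otimes I$ with $c_{n+1}=(1/\tau_1)^{|W_{n+1}|}$), this collapses \eqref{FF} into a single full trace
\[
\varphi^{(\rho_0,h)}_n(a)=\lambda c_{n+1}\,\Tr\Big(K_{[0,1]}^{*}\cdots K_{[n,n+1]}^{*}\,(a\otimes I)\,K_{[n,n+1]}\cdots K_{[0,1]}\Big).
\]

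The decisive simplification is that every $K_{[k,k+1]}$ is a real linear combination of tensor products of $\id$ and $\sigma_z$ (see \eqref{KuL_1}, \eqref{Kuv_L2}, \eqref{Luv_L2}, \eqref{Ax}); hence all of them are self-adjoint and pairwise commuting, being simultaneously diagonal in the $\sigma_z$-eigenbasis. Cyclicity of the full trace then lets me rewrite the product as $(a\otimes I)\prod_{i=0}^{n}K_{[i,i+1]}K_{[i,i+1]}^{*}$, the ordering being immaterial by commutativity. Tracing out the top level $W_{n+1}$ and invoking the identity-preserving property \eqref{id_preser_E}, namely $\Tr_{W_{n+1}}(K_{[n,n+1]}K_{[n,n+1]}^{*})=I_{W_n}$ (valid since $K_{[n,n+1]}$ is self-adjoint), removes the last factor and yields $\varphi(a)=\lambda c_{n+1}\,\Tr\big(a\prod_{i=0}^{n-1}K_{[i,i+1]}K_{[i,i+1]}^{*}\big)$. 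Finally I would collect the accumulated scalar $\lambda c_{n+1}$ into the constant $\alpha^{n}$, where $\alpha$ is the common per-level factor coming from the boundary value $1/\tau_1$ and the root normalization; its value is pinned down unambiguously by imposing $\varphi(\id)=1$.

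The routine part is the bookkeeping of the scalar factors and the exact value of $\alpha$; the conceptual heart, and the step I expect to require the most care, is the passage from the iterated quasi-conditional expectations in \eqref{FF} to a single trace and the verification that the accumulated constant is genuinely an $n$-th power $\alpha^n$. That manipulation is legitimate here precisely because $\rho_0$ is proportional to the trace and the conditional density matrices are mutually commuting self-adjoint operators, a feature special to the diagonal ($\sigma_z$) Ising interactions; for a generic non-commuting interaction the reordering into $\prod_i K_{[i,i+1]}K_{[i,i+1]}^{*}$ would fail. I would therefore be careful to isolate exactly where commutativity, self-adjointness, and the identity-preserving property \eqref{id_preser_E} are invoked.
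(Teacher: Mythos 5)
Your proposal is correct and follows essentially the same route as the paper's own proof: it combines the per-vertex solutions of Subsections~\ref{Sb_sect_Ising_L1} and~\ref{Sb_sect_Ising_L2} into the homogeneous boundary $h^{(u)}=\frac{1}{\tau_1}\id$, takes $\rho_0=\tau_1\Tr$, invokes the general compatibility theorem, and then collapses the iterated transition expectations into a single trace, reordering via commutativity and cyclicity to obtain $\alpha^{n}\Tr\bigl(a\prod_{i=0}^{n-1}K_{[i,i+1]}K_{[i,i+1]}^{*}\bigr)$. The only difference is cosmetic (you carry the level-$(n+1)$ factor and remove it via \eqref{id_preser_E}, while the paper truncates at level $n$ with $h_n$), and your explicit justification of the pairwise commutativity of the $K_{[i,i+1]}$ is a point the paper leaves implicit.
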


We notice that the QMC $\varphi$ is called \textit{disordered phase} of the Ising model.


\begin{proof}
According to the Ising type models considered in the subsection \ref{Sb_sect_Ising_L1} and  the subsection \ref{Sb_sect_Ising_L2},  the equation \eqref{Dysys} admits a unique translation-invariant solution
\begin{equation}\label{h=alphaI}
h^{(u)}_{\alpha}=
 \left(
  \begin{array}{cc}
    \alpha & 0 \\
    0 & \alpha \\
  \end{array}
\right)
\end{equation}
where $\alpha=\frac{1}{\tau_1}$. The initial functional $\rho_0$ can be chosen of the form
$$
\rho_0(a) = \tau_1\Tr(a)
$$
so that $\rho_0(h^{x^{(0)}}) =\Tr(\omega_0h^{x^{(0)}})  = 1$,  where $\omega_0 = \tau_1 \id^{x^{(0)}}$\\
Let $n\in\mathbb N, a\in B_{[0,n]}$, according   (\ref{h=alphaI}) and from Lemma \ref{E(hn+1)=hn},  one finds
\begin{eqnarray*}
\varphi(a) &=& \varphi_n^{(\rho_0, h)}(a)\\
&=&  \rho_0\Bigl(\mathcal E_{[0,1]}\Bigl(a_{ W_0}\otimes\mathcal E_{[1,2]}\Bigl(a_1\otimes\cdots \otimes \mathcal E_{[n-1 ,n+1]}\Bigl(h_{n} a\Bigr)\Bigr)\Bigr)\Bigr)\\
&=&  \Tr\bigl( \omega_0\Tr_{\mathcal{B}_{W_0]}}\Bigl(K_{[0,1]}^{*}\Tr_{\mathcal{B}_{W_1]}}\Bigl(K_{[1, 2]}^{*}\cdots \otimes \Tr_{\mathcal{B}_{W_{n-1}]}}\Bigl(K_{[n-1,n]}^{*}h_{n} a K_{[n-1,n]}\Bigr)\cdots K_{[1, 2]}\Bigr)K_{[0,1]}\Bigr)\Bigr)\\
&=&  \Tr\bigg(\omega_{0}h_{n}\bigg(\prod_{m=0}^{n-1}K_{[m,m+1]}^{*} a K_{[m,m+1]}\bigg) \bigg) \nonumber\\
&=&\alpha^{| W_{n}|-1}\Tr\bigg(a \prod_{m=0}^{n-1}K_{[m,m+1]}K_{[m,m+1]}^{*}\bigg) \nonumber\\
&=&\alpha^{n}\Tr\bigg(a\prod_{i=0}^{n-1}K_{[i,i+1]}K_{[i,i+1]}^{*} \bigg).
\end{eqnarray*}
This completes the proof.
\end{proof}

\section{Clustering property}

The Comb graph $\mathbb{N} \rhd _{0}\mathbb{N}= (V, E)$ is invariant under the action of the semi-group $\mathcal{G}^+$ of translations of the form
\begin{equation}\label{J_n}
J_n: u= (k,l)\in \mathbb{N}\times\mathbb{N} \mapsto u + ne_1  = (k+n, l).
\end{equation}
For each $g= (n,0) \in L_2$ we denote
$$
\tau_g(u) := J_n(u).
$$
In these notations, we have $\mathcal{G}^+ := \{\tau_g \ : \  g \in L_3\}$.

\begin{lemma}\label{CC}
Assume that $n_0\in\mathbb{N}$. Let $a \in \mathcal{B}_{[0,n_0]}$  and $b\in\mathcal{B}_{x^{(0)}}$. Then
\begin{equation}\label{clus_prop_b_n}
\lim_{n\to +\infty}\varphi(aJ_n(b)) = \varphi(a) \varphi(b),
\end{equation}
where $ J_n(b)= b^{(x_{ W_n}^{(0)})}$.
\end{lemma}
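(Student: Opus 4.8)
The plan is to exploit the explicit product formula for $\varphi$ obtained in Theorem \ref{thm1}, namely
$$
\varphi(c) = \alpha^{m}\Tr\Bigl(c\prod_{i=0}^{m-1}K_{[i,i+1]}K_{[i,i+1]}^{*}\Bigr),\qquad c\in\mathcal B_{[0,m]},
$$
together with the fact that the translated observable $J_n(b) = b^{(x_{W_n}^{(0)})}$ lives at the single vertex $(n,0)\in L_2$, which for large $n$ is far from the support $\Lambda_{n_0}$ of $a$. First I would fix $n > n_0$ and evaluate $\varphi(a\,J_n(b))$ using the formula with $m=n$, writing the product $\prod_{i=0}^{n-1}K_{[i,i+1]}K_{[i,i+1]}^{*}$ and inserting $a$ (supported in $\Lambda_{n_0}$) and $b^{(n,0)}$ (supported in $W_n$) at their respective levels. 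Because each $K_{[i,i+1]}$ is localized as $\bigotimes_{x\in W_i}K_{\{x\}\cup S(x)}$ (Lemma \ref{localE_K_lem}), the trace factorizes over the tree-branches emanating from the successors of the vertices in $W_{n_0}$, which is the structural input that separates the $a$-part from the $b$-part.

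The key step is to propagate the boundary observable $b^{(n,0)}$ downward through the localized transition expectations from level $n$ to level $n_0$. Since $(n,0)$ and all the intermediate vertices $(j,0)$ on the horizontal spine belong to $L_2$, I would repeatedly apply $\mathcal E_x$ of the form \eqref{Ex} to the single-site factor carrying $b$. The homogeneity $h^{(u)}=\alpha\,\id$ (equations \eqref{solDysys1} and \eqref{solDysys2}) means every branch hanging off the spine that does not contain $b$ contributes exactly the normalizing factors already accounted for in $\varphi(a)$; only the spine-branch from $(n,0)$ down to level $n_0$ carries $b$ nontrivially. The effect of tracing $b^{(n,0)}$ through one spine step is to replace it by the single-site map $\Phi(b):=\Tr_{u]}\bigl(K_{\{u\}\cup S(u)}^{*}(\id^{(u)}\otimes b\otimes\id)K_{\{u\}\cup S(u)}\bigr)$ (suitably normalized), so that after $n-n_0$ steps the boundary term is replaced by $\Phi^{\,n-n_0}(b)$ acting at vertex $x_{W_{n_0}}^{(0)}$.

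The main obstacle is therefore to show that the normalized one-step map $\Phi$ has a spectral gap, i.e.\ that $\Phi^{k}(b)$ converges as $k\to\infty$ to a multiple of the identity, $\Phi^{k}(b)\to \varphi(b)\,\id$, with the projection onto this limit given precisely by the state $\varphi$ restricted to $\mathcal B_{x^{(0)}}$. Using the explicit coefficients $A,B,C$ from \eqref{Ax} and the $\tau_i$ of \eqref{tau123}, $\Phi$ acts on the two-dimensional commutative subalgebra spanned by $\{\id,\sigma_z\}$, and one computes its action on $\Tr(\cdot)$ and $\Tr(\sigma_z\,\cdot)$ from \eqref{EQ1}: the $\id$-component is fixed while the $\sigma_z$-component is multiplied by the factor $\tau_3\Tr(h)=\tau_3\alpha$. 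Thus the gap condition reduces to verifying $\tau_3\alpha = \tfrac{1}{2}\theta^J(\theta^2-1)/\tau_1 < 1$ (strict contraction on the off-diagonal direction), which in the disordered regime holds; I would check this inequality directly from \eqref{tau123}. Once $\Phi^{\,n-n_0}(b)\to\varphi(b)\,\id$ is established, the residual factor $\varphi(b)\,\id$ detaches from the $b$-branch, the surviving expression collapses to $\alpha^{n_0}\Tr\bigl(a\prod_{i=0}^{n_0-1}K_{[i,i+1]}K_{[i,i+1]}^{*}\bigr)=\varphi(a)$, and taking $n\to\infty$ yields \eqref{clus_prop_b_n}.
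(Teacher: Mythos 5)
Your overall strategy is the same as the paper's: isolate $J_n(b)$ at the spine vertex $x_{W_n}^{(0)}$, use localization so that every off-spine branch reproduces the boundary $h=\alpha\id$ via \eqref{comp_h_x}, and propagate $b$ down the spine by a one-site map whose $\id$-component is preserved and whose $\sigma_z$-component contracts geometrically, the contraction killing the second term in the limit. (Starting from the trace-product formula of Theorem \ref{thm1} rather than from the iterated transition expectations is an equivalent, cosmetic difference.)

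There is, however, a genuine error in the key quantitative step: the identification of the contraction coefficient. You read it off from \eqref{EQ1}, i.e.\ from $\Tr(\sigma_z h)=\tau_3\Tr(h)\Tr(\sigma_z h)$, and conclude that the spine map multiplies the $\sigma_z$-component by $\tau_3\Tr(h)=\tau_3\alpha=\tau_3/\tau_1$. But \eqref{EQ1} is the fixed-point equation for the boundary, in which $h$ sits at \emph{both} successors of a spine vertex, and the coefficient $\tau_3=4B(A+C)$ collects the cross terms from both positions symmetrically. The relevant propagation map has $b$ at the successor $u+e_1$ only, with $h=\alpha\id$ at $u+e_2$; expanding $K^{*}(\id\otimes b\otimes h)K$ via \eqref{Ax}, the terms $\Tr(b)\Tr(\sigma_z h)$ vanish and only $2B(A+C)\alpha\,\Tr(\sigma_z b)$ survives, so the correct per-step factor is $\tau_3\alpha/2=\tau_3/(2\tau_1)$ (the paper records $\tau_3/(4\tau_1)$), not $\tau_3\alpha$. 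This matters: $\tau_3/\tau_1=\tfrac{2\theta^{J}(\theta^{2}-1)}{\theta^{J}(\theta^{2}+1)+2\theta}$ exceeds $1$ whenever $\theta^{J}(\theta^{2}-3)>2\theta$ (e.g.\ $\theta=10$, $J=1$ gives $\approx 1.92$), so the inequality you propose to ``check directly'' is simply false on a large part of the parameter range, and your argument would either fail to close or yield clustering only in a restricted regime, whereas the lemma holds for all $\beta>0$, $J>0$. With the correct coefficient one has $\tau_3/(2\tau_1)=\tfrac{\theta^{J}(\theta^{2}-1)}{\theta^{J}(\theta^{2}+1)+2\theta}<1$ unconditionally and the proof goes through. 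A minor additional point: the very first step, where $b$ sits \emph{at} the vertex $x_{W_n}^{(0)}$ rather than at a successor, produces $\alpha^{2}\bigl[(A^{2}+C^{2})b+2B^{2}\sigma_z b\sigma_z\bigr]$ and should be treated separately from the subsequent spine steps, as the paper does.
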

\begin{proof}
 For large enough $n$, one gets $ab_n = a\otimes b_n$, $n>n_0$. So,
\begin{eqnarray*}
\varphi(a\otimes b_n)&=& \varphi_n^{(\rho_0, h_\alpha)}(a\otimes b_n)\\
&=& \rho_0\circ \mathcal{E}_{[0,1]}\circ\cdots\circ \mathcal{E}_{[n_0, n_0+1]}(a\otimes  \mathcal{E}_{[n_0+1, n_0+2]}(\id_{ W_{n_0+1}}\otimes \cdots\\
&&\otimes\mathcal{E}_{[n-1,n]}\circ \hat{\mathcal{E}}_{[n , n +1]}(b_n\otimes\id_{ W_{n+1}}).
\end{eqnarray*}
Here, the boundary condition is taken, as before, $h^{(u)} = h_\alpha = \alpha\id$ is a common fixed points of the systems (\ref{DySyL1}) and (\ref{Dysys2}) and the initial state $\rho(\cdot) = \mathrm{Tr}(\omega_0 \, \cdot),  \, \omega_0 = \frac{1}{\alpha}\id$ with $\alpha =\frac{4}{\theta^J(\theta^2+1)+2\theta}$.

Then, one finds
\begin{eqnarray*}
 && \hat{\mathcal{E}}_{[n,n+1]}(b_n\otimes\id_{ W_{n+1}})  =
\mathrm{Tr}_{n]}\left(  K_{[n,n+1]}^{\ast}\bh_{n+1}^{1/2} ( b\otimes \id_{ W_{n+1}})\bh_{n+1}^{1/2}K_{[n,n+1]}\right)\\
  &=&\bigotimes_{u\in W_n}\mathrm{Tr}_{u]}
\left(  K_{\{u\}\cup S(u)}^{\ast}\bigotimes_{v\in S(u)}(h^{(v)})^{1/2} ( b\otimes \id_{ W_{n+1}})\bigotimes_{v\in S(u)} (h^{(v)})^{1/2}K_{[n,n+1]}\right)\\
&=& \mathrm{Tr}_{x_{ W_n}^{(0)}]}
 \left(  K_{\{x_{ W_n}^{(0)}\}\cup S(x_{ W_n}^{(0)})}^{\ast}
 b^{(x_{ W_n}^{(0)})}\otimes h_{S(x_{ W_n}^{(0)})} K_{\{x_{ W_n}^{(0)}\}\cup S(x_{ W_n}^{(0)})}\right)\otimes\\
&&\bigotimes_{u\in W_n\setminus \{x_{ W_n}^{(0)}\}}\mathrm{Tr}_{u]}\left(  K_{\{u\}\cup S(u)}^{\ast}\bigotimes_{v\in S(u)}h^{(v)}K_{[n,n+1]}\right).
\end{eqnarray*}

By \eqref{comp_h_x}
 $$
\mathrm{Tr}_{u]}\left(  K_{\{u\}\cup S(u)}\bigotimes_{v\in S(u)}h^{(v)}K_{[n,n+1]}^{\ast}\right) = h^{(u)}
$$
and  $$\mathrm{Tr}_{x_{ W_n}^{(0)}]}
 \left(  K_{\{x_{ W_n}^{(0)}\}\cup S(x_{ W_n}^{(0)})}^{\ast}b^{(x_{ W_n}^{(0)})}
 \left(\id^{(x_{ W_n}^{(0)})}\otimes h_{S(x_{ W_n}^{(0)})} \right) K_{\{x_{ W_n}^{(0)}\}\cup S(x_{ W_n}^{(0)})}\right)\\
$$
\begin{equation}\label{trx](KbK)}
=  \alpha^2 \left[(A^2 + C^2) b^{(x_{ W_n}^{(0)})} + 2B^2  \sigma_{z} b^{(x_{ W_n}^{(0)})}\sigma_{z} \right].
\end{equation}

Then
$$
\mathcal{E}_{[n-1,n]}(\id_{ W_{n-1}}\otimes\hat{\mathcal{E}}( b^{(x_{ W_n}^{(0)})}\otimes \id))
 =\alpha^2 (A^2 + C^2)\mathcal{E}_{[n-1,n]}\left(\id_{ W_{n-1}}\otimes b^{(x_{ W_n}^{(0)})}
\otimes\bigotimes_{u\in  W_{n}\setminus \{x_{ W_n}^{(0)}\}}h^{(u)}\right)
$$
$$
  + 2\alpha^2B^2 \mathcal{E}_{[n-1,n]}\left(\id_{ W_{n-1}}\otimes \sigma_{z} b^{(x_{ W_n}^{(0)})}\sigma_{z}
\otimes\bigotimes_{u\in  W_{n}\setminus \{x_{ W_n}^{(0)}\}}h^{(u)} \right).
$$
A small calculation leads to
$$
\Tr_{x_{ W_{n-1}}^{(0)}]}\left(K_{\{x_{ W_{n-1}}^{(0)}\}\cup S(x_{ W_{n-1}}^{(0)})} \id^{(x_{ W_{n-1}}^{(0)})}\otimes b^{(x_{ W_{n-1}}^{(0)}+e_1)}
\otimes h^{(x_{ W_{n-1}}^{(0)}+e_2)}  K_{\{x_{ W_{n-1}}^{(0)}\}\cup S(x_{ W_{n-1}}^{(0)})}^*\right)
$$
$$
 = \alpha \left((A^2 + 2B + C^2)\mathrm{Tr}(b)\id^{x_{ W_{n-1}}^{(0)}} + 2B(A+C)\mathrm{Tr}(\sigma_{z} b)\sigma_{z}^{x_{ W_{n-1}}^{(0)}} \right)
$$
$$
=\Tr_{x_{ W_{n-1}}^{(0)}]}\left(K_{\{x_{ W_{n-1}}^{(0)}\}\cup S(x_{ W_{n-1}}^{(0)})}
\id^{(x_{ W_{n-1}}^{(0)})}\otimes (\sigma_{z} b\sigma_{z})^{(x_{ W_{n-1}}^{(0)}+e_1)}
\otimes h^{(x_{ W_{n-1}}^{(0)}+e_2)}  K_{\{x_{ W_{n-1}}^{(0)}\}\cup S(x_{ W_{n-1}}^{(0)})}^*\right).
$$
This implies
$$
\mathcal{E}_{[n-1,n]}(\id_{ W_{n-1}}\otimes\hat{\mathcal{E}}( b^{(x_{ W_n}^{(0)})}\otimes \id))
= \Tr(b)h_{n-1} +  \Tr(\sigma_{z} b) (\frac{\tau_3}{4\tau_1}) \sigma_{z}^{x_{ W_{n-1}}^{(0)}}h_{n-1}.
$$

On the other hand, using \eqref{trx](KbK)} one gets
$$
\varphi(b^{(o)}) =\rho_0( \mathcal{E}_{[0,1]}(b^{(o)}\otimes h_1)) = \Tr(b)
$$
Since  for each $k\in\mathbb{N}$ we obtain
$$
\Tr_{x_{ W_{k}}^{(0)}]}\left(K_{\{x_{ W_{k}}^{(0)}\}\cup S(x_{ W_{k}}^{(0)})} \id^{(x_{ W_{k}}^{(0)})}\otimes \sigma_{z}^{(x_{ W_{k}}^{(0)}+e_1)}
\otimes h^{(x_{ W_{k}}^{(0)}+e_2)}  K_{\{x_{ W_{k}}^{(0)}\}\cup S(x_{ W_{k}}^{(0)})}^*\right)
$$
$$
= \alpha(A+B)C \sigma_{z}^{(x_{ W_{k}}^{(0)})} = \frac{\tau_3}{4\tau_1}\sigma_{z}^{(x_{ W_{k}}^{(0)})}.
$$
and taking into account (\ref{comp_h_x}), a simple iteration  leads to
$$
\mathcal{E}_{[k,k+1]}(\id\otimes \cdots \mathcal{E}_{[n-1,n]} (\id\otimes \hat{\mathcal{E}}_{[n,n+1]}(b^{x_{ W_n}^{(1)}}) =
\varphi(b) h_{k}
+  \mathrm{Tr}(\sigma_{z} b)\left(\frac{\tau_3}{4\tau_{1}} \right)^{n-k}
(\sigma_{z}^{(x_{ W_k}^{(1)})}\otimes \id_{ W_k\setminus \{x_{ W_k}^{(1)}\}})h_k.
$$
In particular, by taking $k=n_0 + 1, n_0 +2$, one gets
$$
\varphi_n(aJ_{n}(b)) =
\varphi(b)\rho_0(\mathcal{E}_{[0,1]}(a_{ W_0}\otimes \mathcal{E}_{[1,2]}(a_{ W_1}\cdots
\mathcal{E}_{[n_0-1,n_0]}(a_{ W_{n_0-1}}\otimes \mathcal{E}_{[n_0,n_0+1]}(a_{\Lambda}\otimes h_{n_0+1}))))
$$
$$
+ \mathrm{Tr}(\sigma_{z} b) \left(\frac{\tau_3}{4\tau_{1}}\right)^{n-n_0-2}\rho_0(\mathcal{E}_{[0,1]}(a_{ W_0}\otimes \mathcal{E}_{[1,2]}(a_{ W_1}\cdots
\mathcal{E}_{[n_0,n_0+1]}(a_{\Lambda}\otimes\mathcal{E}_{[n_0+1,n_0]}(\sigma_{z}^{(x_{ W_{n_0+1}})}\otimes h_{n_0+2})))).
$$
Thus
\begin{equation}\label{varph(ab_nexpress)}
\varphi_n(aJ_{n}(b)) = \varphi(b)\varphi_{n_0}(a) + \mathrm{Tr}(\sigma_{z} b)  \left(\frac{\tau_3}{4\tau_{1}}\right)^{n-n_0-2}
 \varphi_{n_0+1}(a\otimes \sigma_{z}^{(x_{ W_{n_0+1}})} ).
\end{equation}
From \eqref{tau123} it follows that
$$
\frac{\tau_3}{4\tau_{1}} = \frac{\theta^J(\theta^2-1)}{2(\theta^J(\theta^2+1)+2\theta)}\in [0 , \frac{1}{2}]; \quad \theta = \exp(2\beta)>1, J>0.
$$
Now taking the limit $n\to +\infty$ in \eqref{varph(ab_nexpress)} we arrive at \eqref{clus_prop_b_n}.
\end{proof}

The main result of this section is the following.

\begin{theorem}\label{thm_clus_property}
   Let $\varphi$ be a QMC associated with the Ising with (ZZ) coupling model on the comb graph $\mathbb{N}\rhd_0\mathbb{N}$. Then

\begin{equation}\label{clust_prop}
\lim_{|g|\to +\infty}\varphi(a\tau_{g}(b)) = \varphi(a) \varphi(b)
\end{equation}
for all $a,b\in\mathcal{B}$.
 \end{theorem}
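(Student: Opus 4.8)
The plan is to reduce the general clustering statement \eqref{clust_prop} to the special case already secured in Lemma \ref{CC}, where $a$ is localized in $\mathcal{B}_{[0,n_0]}$ and $b$ is a single-site observable at the root. Observe first that for $g=(n,0)$ one has $|g|=n$, so that $\lim_{|g|\to+\infty}$ is just $\lim_{n\to+\infty}$, and $\tau_{(n,0)}$ coincides with $J_n$; thus Lemma \ref{CC} is precisely the case $a\in\mathcal{B}_{[0,n_0]}$, $b\in\mathcal{B}_{x^{(0)}}$ of the theorem. To pass to arbitrary $a,b$ in the quasi-local algebra $\mathcal{B}\equiv\mathcal{B}_L$, I would exploit that $\mathcal{B}_L$ is the $C^\ast$-closure of $\mathcal{B}_{L,loc}=\bigcup_{\Lambda\subf L}\mathcal{B}_\Lambda$, that $\varphi$ is a state (so $\|\varphi\|=1$ and $\varphi$ is norm-continuous), and that each $\tau_g$ is an isometric $\ast$-automorphism of $\mathcal{B}_L$. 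Given $a,b\in\mathcal{B}_L$ and $\e>0$, I pick local approximants $a',b'$ with $\|a-a'\|<\e$, $\|b-b'\|<\e$ and bound
\begin{equation*}
|\varphi(a\tau_g(b))-\varphi(a)\varphi(b)|\le |\varphi(a'\tau_g(b'))-\varphi(a')\varphi(b')| + c(\|a\|,\|b\|)\,\e,
\end{equation*}
where the remainder collects the error terms produced by substituting $a',b'$ for $a,b$ inside $\varphi(\cdot\,\tau_g(\cdot))$ and inside $\varphi(a)\varphi(b)$, each controlled via $\|\tau_g(b')\|=\|b'\|$ and the triangle inequality. Letting $g\to\infty$ and then $\e\to0$ shows it suffices to prove \eqref{clust_prop} for local $a,b$.

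Second, I would reduce a local $b$ to the building blocks handled by Lemma \ref{CC}. By bilinearity of $(a,b)\mapsto\varphi(a\tau_g(b))-\varphi(a)\varphi(b)$ it is enough to take $a\in\mathcal{B}_{[0,n_0]}$ and $b$ a simple tensor $\bigotimes_{u}b^{(u)}$ supported on finitely many sites; and since each single-site algebra $M_2(\mathbb{C})$ is spanned by $\id$, $\sigma_z$ and the off-diagonal matrix units, the propagation through the layers is governed by the two tracial functionals $\Tr(\cdot)$ and $\Tr(\sigma_z\,\cdot)$ (off-diagonal parts being annihilated by the partial trace one layer inward, exactly as in the computation behind \eqref{trx](KbK)}). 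For such $b$ the translate $\tau_{(n,0)}(b)$ is supported on the sites $u+(n,0)$, hence on layers $W_m$ whose indices all tend to $+\infty$ with bounded mutual spread as $n\to+\infty$.

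The core of the argument then repeats, layer by layer, the computation of the proof of Lemma \ref{CC}. Applying the localized transition expectations $\mathcal{E}_{[k,k+1]}$ from the outermost occupied layer inward toward the root, and using that the homogeneous boundary $h_\alpha=\alpha\id$ is the common fixed point of \eqref{DySyL1} and \eqref{Dysys2}, each partial trace sends the $\id$-component to $\varphi(b)\,h_k$ while the $\sigma_z$-component is reproduced only up to the scalar $\tfrac{\tau_3}{4\tau_1}$ per layer. Iterating across the layers separating the support of $\tau_{(n,0)}(b)$ from $\Lambda_{n_0}$ yields, exactly as in \eqref{varph(ab_nexpress)}, a decomposition
\begin{equation*}
\varphi\bigl(a\,\tau_{(n,0)}(b)\bigr)=\varphi(b)\,\varphi(a)+ \Bigl(\tfrac{\tau_3}{4\tau_1}\Bigr)^{\,n-n_0-O(1)}\,R(a,b),
\end{equation*}
with $R(a,b)$ a fixed finite correlation term. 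By \eqref{tau123} one has $\tfrac{\tau_3}{4\tau_1}\in[0,\tfrac12)$ for every $\theta=\exp(2\beta)>1$ and $J>0$, so the correction decays geometrically and the limit is $\varphi(a)\varphi(b)$.

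The main obstacle I anticipate lies in this last step: passing from the single-site observable of Lemma \ref{CC} to a multi-site $b$ whose translate occupies several adjacent layers, including $L_1$-type vertices off the horizontal axis. One must verify that tracing through the intermediate layers still produces only a scalar multiple of the boundary $h_\alpha$ — giving the clean factor $\varphi(b)$ — plus terms uniformly dominated by the geometric factor; this relies essentially on the homogeneity $h^{(u)}=\alpha\id$ and on the fixed-point identity \eqref{comp_h_x}. A secondary point is to confirm that the finitely many correction terms arising from the bounded spread of the support of $b$ share the common decay rate $\tfrac{\tau_3}{4\tau_1}<\tfrac12$, so that their sum still vanishes as $n\to+\infty$.
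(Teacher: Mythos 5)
Your overall strategy coincides with the paper's: reduce to local $a,b$ by norm-density and continuity of the state (a step the paper leaves implicit), reduce to $b$ a simple tensor by linearity, and then funnel everything into Lemma \ref{CC}, whose geometric factor $\tau_3/(4\tau_1)<1$ drives the decay. However, the one step you explicitly flag as ``the main obstacle'' --- passing from a single-site $b$ at the translated root to a multi-site $b$ occupying several adjacent layers --- is exactly where the paper's proof does its actual work, and your sketch of that step is not quite the route the paper takes (and, as written, would be delicate to push through: inside the support of $\tau_g(b)$ the partial traces do \emph{not} return ``a scalar multiple of $h_\alpha$ plus geometrically small terms''; they return a genuinely arbitrary single-site matrix, and only \emph{after} the support has been fully integrated out does the $\id$/$\sigma_z$ dichotomy with the contraction factor $\tau_3/(4\tau_1)$ take over).

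The paper's resolution is an exact algebraic compression rather than an estimate. Writing $b=\bigotimes_{k=0}^{m_0}b_{W_k}$ and using the translation covariance of the localized transition expectations together with the fixed-point identity \eqref{comp_h_x} at all sites of $W_{n+k}\setminus J_n(W_k)$ (where the boundary $h_\alpha=\alpha\id$ is simply absorbed), the nested maps
$\mathcal{E}_{[n,n+1]}\circ\cdots\circ\hat{\mathcal{E}}_{[n+m_0,n+m_0+1]}$ applied to $\tau_{(n,0)}(b)$ collapse it \emph{exactly} to $\alpha^{m_0}\,J_n(g)\otimes h_{n+1}$-type data, where
$g=F_{W_0}\bigl(b_{W_0}\otimes\cdots F_{W_{m_0}}(b_{W_{m_0}})\bigr)$
is a \emph{fixed} element of the single-site algebra $\mathcal{B}_{x^{(0)}}$, independent of $n$. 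This yields the identity $\varphi(a\,\tau_{(n,0)}(b))=\alpha^{m_0}\varphi_n(a\,J_n(g))$, to which Lemma \ref{CC} applies verbatim, and the normalization $\alpha^{m_0}\varphi(g)=\varphi(b)$ (since $F_{W_k}=\mathcal{E}_{[k,k+1]}$) recovers the correct limit. So your plan is salvageable, but to complete it you should replace the heuristic ``track $\id$ and $\sigma_z$ components through all layers'' by this exact reduction to a single-site observable; without it, the claimed decomposition $\varphi(a\tau_g(b))=\varphi(a)\varphi(b)+(\tau_3/(4\tau_1))^{n-n_0-O(1)}R(a,b)$ with a \emph{fixed} remainder $R(a,b)$ is asserted rather than proved.
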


\begin{proof}

Let $a,b\in \mathcal{B}_{L, loc}$. There exist  $n,0, m_0\in\mathbb{N}$ such that
$a= a_{n_0}\in\mathcal{B}_{[0,n_0]}$ and  $b = b_{m_0} \in  \mathcal{B}_{[0, m_0]}$. Without lost of generality, we may assume that  $b$ is localized in the following form
$$
b =\bigotimes_{u\in [0,m_0]}b_u =  \bigotimes_{k=0}^{m_0}b_{ W_k}; \quad b_{ W_k} = \bigotimes_{u\in  W_{k}}b_u \in \mathcal{B}_{ W_k}
$$
One can see that
$$
J_n(b)  = \bigotimes_{u\in  [0,m_0] }b_{u}^{(u + ne_1)} = \bigotimes_{v\in [n, n +m_0]}\tilde{b}_{v} \in\mathcal{B}_{ [n, n +m_0] }
$$
where
$$
\tilde{b}_{v}  = \left\{
                   \begin{array}{ll}
                     b_{v-ne_1}, & \hbox{if $v-ne_1\in  \Lambda_{m_0}$;} \\
                     1, & \hbox{otherwise.}
                   \end{array}
                 \right.
$$

 Define
 \begin{equation}\label{fJn(lambda)}
F_{J_n( W_{k })}(a):= \bigotimes_{v\in J_n( W_{k})}\Tr_{v]}\left(K_{\{v\}\cup S(v)} a  K^{*}_{\{v\}\cup S(v)}\right); \quad a\in \mathcal{B}_{[k,k+1]}
\end{equation}


Taking into account (\ref{comp_h_x}), one finds

$$ \hat{\mathcal{E}}_{[n+m_0, n+m_0+1]}(\tilde{b}_{ W_{n+m_0}})))
= \left(\bigotimes_{v\in J_n( W_{m_0})}
\Tr_{v]}\left(K_{\{v\}\cup S(v)}  b_{v} h_{S(v)} K^{*}_{\{v\}\cup S(v)}\right)\right)\otimes
\left(\bigotimes_{v\in   W_{n+m_0}\setminus J_n( W_{m_0})} h_v\right)
$$
$$
= \alpha^{ | W_{m_0+1}|} F_{J_n( W_{m_0})}(\tilde{b}_{J_n( W_{m_0})})\otimes
\left(\bigotimes_{v\in   W_{n+m_0}\setminus J_n( W_{m_0})} h_v \right).
$$
According to the structure of the  comb graph $\mathbb{N}\rhd_0\mathbb{N}$ one has
$$
J_n( W_{k+1}) = \bigcup_{v\in J_n( W_k)}S(v).
$$

Then
\begin{eqnarray*}
&&{\mathcal{E}}_{[n+m_0-1, n+m_0 ]}(\tilde{b}_{ W_{n+m_0-1}}\otimes\hat{\mathcal{E}}_{[n+m_0, n+m_0+1]}(\tilde{b}_{ W_{n+m_0}}))) )=\\
&&=\alpha^{ | W_{m_0+1}|}
\bigotimes_{u\in J_n( W_{m_0-1})}\Tr_{u]}\left(K_{\{u\}\cup S(u)}  \tilde{b}_u\otimes   F_{J_n( W_{m_0})}(\tilde{b}_{J_n( W_{m_0})})K_{\{u\}\cup S(u)} \right)\\
&&\otimes\bigotimes_{w\in   W_{n+m_0-1}\setminus J_n( W_{m_0-1})} h_w, \\
&&=\alpha^{ | W_{m_0+1}|}
    F_{J_n( W_{m_0-1})}\left(\tilde{b}_{J_n( W_{m_0-1})}\otimes F_{J_n( W_{m_0})}(\tilde{b}_{J_n( W_{m_0})}) \right)
\otimes\bigotimes_{w\in   W_{n+m_0-1}\setminus J_n( W_{m_0-1})} h_w.
\end{eqnarray*}
An iteration leads to
$$
\mathcal{E}_{[n,n+1]}(\tilde{b}_{ W_n}\otimes \cdots \mathcal{E}_{[n+m_0-1, n+m_0]}(\tilde{b}_{ W_{n+m_0-1}}\otimes \hat{\mathcal{E}}_{[n+m_0, n+m_0]}(\tilde{b}_{n+m_0})))
$$
$$
= \alpha^{ | W_{m_0+1}|}
     F_{J_n( W_{0})}\left(\tilde{b}_{J_n( W_{0})}\otimes\cdots F_{J_n( W_{m_0-1})}\left(\tilde{b}_{J_n( W_{m_0-1})}\otimes F_{J_n( W_{m_0})}(\tilde{b}_{J_n( W_{m_0})}) \right)\right)
\otimes\bigotimes_{w\in   W_{n}\setminus J_n( W_{0})} h_w.
$$
Since $J_{n}( W_0) = \{x_{ W_n}^{(0)}\}$ and $h_w = \alpha\id$ for each $w$ then
\begin{eqnarray*}
&&\mathcal{E}_{[n,n+1]}(\tilde{b}_{ W_n}\otimes \cdots \mathcal{E}_{[n+m_0-1, n+m_0]}(\tilde{b}_{ W_{n+m_0-1}}\otimes \hat{\mathcal{E}}_{[n+m_0, n+m_0]}(\tilde{b}_{n+m_0})))\\
&&= \alpha^{ | W_{m_0+1}|+ | W_n|-1}
     F_{x_{ W_{n}}^{(0)}}\left(\tilde{b}_{J_n( W_{0})}\otimes\cdots F_{J_n( W_{m_0-1})}\left(\tilde{b}_{J_n( W_{m_0-1})}\otimes F_{J_n( W_{m_0})}(\tilde{b}_{J_n( W_{m_0})}) \right)\right)
\otimes\id_{ W_{n}\setminus \{x_{ W_{n}}^{(0)}\}},\\
&& = b_n \otimes h_{n+1}
\end{eqnarray*}
where
$$
g_n: = \alpha^{ |m_0|}F_{x_{ W_{n}}^{(0)}}\left(\tilde{b}_{J_n( W_{0})}\otimes\cdots F_{J_n( W_{m_0-1})}\left(\tilde{b}_{J_n( W_{m_0-1})}\otimes F_{J_n( W_{m_0})}(\tilde{b}_{J_n( W_{m_0})}) \right)\right)\in \mathcal{B}_{x_{ W_n}^{(0)}}.
$$

One can easily check that $b_n =\alpha^{ |m_0|} J_n(b)$ with
$$
g =F_{ W_0}\left(b_{ W_{0}}\otimes\cdots F_{  W_{m_0-1}}\left( b_{ W_{m_0-1}}\otimes F_{ W_{m_0}}( b_{  W_{m_0} }) \right)\right)\in \mathcal{B}_{ W_{0}}.
$$
Then
\begin{eqnarray*}
&&\varphi(a_{n_0}\otimes J_n(b_{m_0}) = \rho_0(\mathcal{E}_{[0,1]}(a_{ W_0}\cdots \mathcal{E}_{[n_0,n_0+1]}(a_{ W_{n_0}}\otimes \mathcal{E}_{[n_0,n_0+1]}(\id_{ W_{n_0+1}}\otimes \cdots\\
&& \mathcal{E}_{[n,n+1]}(\tilde{b}_{ W_n}\otimes \cdots \mathcal{E}_{[n+m_0-1, n+m_0]}(\tilde{b}_{ W_{n+m_0-1}}\otimes \hat{\mathcal{E}}_{[n+m_0, n+m_0]}(\tilde{b}_{n+m_0})))))))
\\
&&= \rho_0(\mathcal{E}_{[0,1]}(a_{ W_0}\cdots \mathcal{E}_{[n_0,n_0+1]}(a_{ W_{n_0}}\otimes \mathcal{E}_{[n_0,n_0+1]}(\id_{ W_{n_0+1}}\otimes \cdots \mathcal{E}_{[ n,n+1]}(g_{n}\otimes h_{n+1})))))\\
&&= \alpha^{ |m_0|}\varphi_n(a_{n_0}J_n(g))
\end{eqnarray*}
By Lemma \ref{CC}, one finds
$$
  \lim_{n\to +\infty}\varphi_n(a_{n_0}J_n(g)) =   \varphi(a_{n_0} ) \varphi(g).
$$
Due to $F_{ W_k}= \mathcal{E}_{[k,k+1]}$ for $0\le k\le m_0$, we obtain
$$
   \alpha^{ |m_0|}\varphi(g) =  \varphi(b_{m_0}).
$$
Therefore,
$$
\lim_{n\to+\infty} \varphi(a_{n_0}J_n(b_{m_0})) = \varphi(a_{n_0})\varphi(b_{m_0})
$$
which completes the proof.
\end{proof}

\section*{Acknowledgments}
The authors gratefully acknowledge Qassim University, represented by the Deanship of Scientific Research, on the
financial support for this research under the number (cba-2019-2-2-I-5400)
during the academic year 1440 AH / 2019 AD.

\end{document}